\journal{CAD}
\renewcommand*{\today}{April, 2016}
\newtheorem{lemma}{Lemma}
\newcommand{\draftfig}{.small}
\newcommand{\SK}{\mathcal{S}}
\newcommand{\WF}{\mathcal{W}}
\newcommand{\R}{\mathbb{R}}
\newcommand{\bigO}{\ensuremath{\mathcal{O}}}
\def\+{\discretionary{}{}{}}
\begin{document}

\begin{frontmatter}
\title{Straight Skeletons with Additive and Multiplicative Weights and Their
       Application to the Algorithmic Generation of Roofs and Terrains}

\author[sbg]{Martin Held}
\author[sbg]{Peter Palfrader}

\address[sbg]{
    Universit\"at Salzburg,
    FB Computerwissenschaften,
    5020 Salzburg, Austria;
    \texttt{\{held,\+palfrader\}@cosy.sbg.ac.at}.
}

\begin{abstract}
  We introduce additively-weighted straight skeletons as a new generalization
  of straight skeletons.  An additively-weighted straight skeleton is the
  result of a wavefront-propagation process where, unlike in previous variants
  of straight skeletons, wavefront edges do not necessarily start to move at
  the begin of the propagation process but at later points in time. We analyze
  the properties of additively-weighted straight skeletons and show how to
  compute straight skeletons with both additive and multiplicative weights.

  We then show how to use additively-weighted and multiplicatively-weighted
  straight skeletons to generate roofs and terrains for polygonal shapes such
  as the footprints of buildings or river networks. As a result, we get an
  automated generation of roofs and terrains where the individual facets have
  different inclinations and may start at different heights.
\end{abstract}

\begin{keyword}
weighted straight skeleton \sep additive weights \sep multiplicative weights
\sep roof \sep terrain \sep wavefront propagation
\end{keyword}

\end{frontmatter}

\section{Introduction}

\subsection{Motivation and prior work} \label{sec:prior_work}

Straight skeletons were introduced to computational geometry over 20
years ago by Aichholzer et al.~\cite{Aic&95}. Suppose that the edges
of a simple polygon $P$ move inwards with unit speed in a
self-parallel manner, thus generating mitered offsets inside of $P$.
Then the (unweighted) straight skeleton of $P$ is the geometric graph
whose edges are given by the traces of the vertices of the shrinking
mitered offset curves of $P$, see \Cref{fig:sk} and \Cref{sec:prelim}.

Multiplicatively-weighted straight skeletons were first mentioned by
Aichholzer and Aurenhammer~\cite{AiAu98} and then by Eppstein and
Erickson~\cite{EpEr99}.  Roughly, the presence of multiplicative weights
implies that the edges of $P$ are allowed to move inwards at different speeds.
Recently, multiplicatively-weighted straight skeletons were studied in
detail by Biedl et al.~\cite{Bie&15a} who analyzed under which conditions
properties of the unweighted skeleton carry over to the weighted pendant.

Unweighted and multiplicatively-weighted straight skeletons are known to have
applications in diverse fields.
Aurenhammer~\cite{Aurenhammer08} investigates fixed-share decompositions of
convex polygons using skeletons with specific positive multiplicative weights.
Barequet et al.~\cite{Bar&08} employ multiplicatively-weighted straight
skeletons as a theoretical tool for computing (unweighted) straight skeletons
in three-space. Barequet and Yakersberg \cite{BaYa03} morph shapes by means of
their straight skeletons.
Tomoeda and Sugihara use straight skeletons to create signs with an illusion of
depth~\cite{ToSu12}, and
Sugihara also uses multiplicatively-weighted skeletons in the design of pop-up
cards~\cite{Sugihara13}.
Haunert and Sester~\cite{HaSe08} apply them for topology-preserving area
collapsing in geographic information systems (GIS). In another GIS
application, Vanegas et al.\ \cite{Van*12} use straight skeletons for
generating parcels in urban modeling.

The automatic generation of roofs of buildings based on straight skeletons of
their footprints (i.e., bird's eye view) has also received
wide-spread attention in large-scale urban modeling. E.g., Larive and Gaildrat
\cite{LaGa06}, M\"uller et al.\ \cite{Mue*06}, and Buron et al.\ \cite{Bur*13}
combine GIS data and shape grammars with production rules to generate roofs
for buildings. As a starting point or if a purely grammar-based generation is
not possible, they resort to roofs obtained from straight skeletons. The roofs
in the recent work by Sugihara \cite{Sugi13a,Sugi15} are based on straight
skeletons as well.  Furthermore, Laycock and Day~\cite{LaDa03} and Kelly and
Wonka~\cite{KeWo11} use multiplicatively-weighted straight skeletons for
modeling roofs in more realistic ways.

A problem closely related to the generation of roofs is the
(re-)construction of terrains. For instance, we might be given
a river map together with estimates of the slopes of the terrain. Straight
skeletons offer a promising approach to both roof generation and terrain
construction.

Roofs created by straight skeletons are limited to hip roofs and, with some
postprocessing, gable roofs. Their ridges tend to be parallel to long
edges of the footprint of the building. Typically, such roofs will not have
ridges that are perpendicular to long (parallel) edges of the footprint.

\subsection{Our contribution}

We introduce an additively-weighted straight skeleton as a new
generalization of straight skeletons: If additive weights are present then
edges of the input need not all start to move at the same time.  We analyze
the properties of additively-weighted straight skeletons and show how to
extend the standard algorithmic framework for computing straight skeletons
(based on wavefront propagation) to additively-weighted straight skeletons.

We also argue that this framework allows to handle both additive and
multiplicative weights.
Multiplicative weights translate to different speed functions for the input
edges, but each speed stays constant throughout the entire movement of the
edge. As a matter of fact, in our framework any speed function that remains
piecewise constant could be used for an edge, thus extending traditional
straight skeletons even further.

The input for our algorithm need not be constrained to simple
polygons. Rather, any so-called planar straight-line graph (PSLG) forms a
permissible input. (A PSLG is a collection of straight-line segments that do
not intersect pairwise except at common end-points.)

Combining both additive and multiplicative weights yields input edges
that (1) are allowed to move at different speeds and (2) may start at
different times.  As a result, we get an automated generation of roofs
or terrains where the individual facets have different inclinations
and may start at different heights.  In particular, additive weights
allow for gable roofs without postprocessing, with the ridge being
perpendicular to some long edge of the footprint. General piecewise
constant speed functions result in piecewise linear surfaces (roof,
terrain, etc.) where individual facets may have kinks.

As for unweighted straight skeletons, additively and multiplicatively
weighted straight skeletons come with an important property: A
raindrop that hits a facet of a surface generated by means of a
weighted straight skeleton is guaranteed to run off. That is, no local
minima can occur on the surface.

\section{Preliminaries} \label{sec:prelim}

\paragraph{Wavefront Propagation Process}

Let $P$ denote a simple polygon.  The straight skeleton of $P$ is
defined by means of a wavefront propagation process.  The wavefront
$\WF_P(t)$ is a set of wavefront polygons and changes with time $t$.
Initially, at time zero, $\WF_P(0)$ consists only of $P$.  Then, as time
increases, the edges of $\WF_P(t)$ move towards the interior of $P$ at unit
speed in a self-parallel manner, thereby preserving incidences.  Thus, the
vertices of $\WF_P(t)$ move along the angular bisectors of polygon edges, and
the wavefront corresponds to a mitered offset of $P$, see \Cref{fig:sk}.

\begin{figure}[htb!]
  \centering
  \includegraphics[page=1]{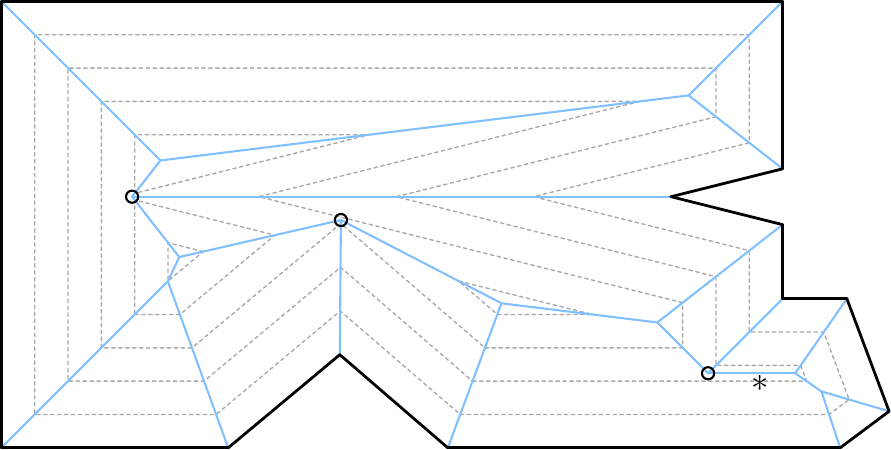}
    \caption{%
       Polygon (bold) with its straight skeleton.
       A family of mitered offset curves, i.e., the wavefront at different
       times, is shown in dotted gray.
       The straight skeleton nodes marked with $\circ$ are the result of
       split events; the others come from edge events.
       The straight skeleton arc marked with $*$ is one that was added
       when two parallel wavefront edges moved into each other.
     }
  \label{fig:sk}
\end{figure}

To maintain the planarity of the wavefront during the propagation process,
Aichholzer et al.~\cite{Aic&95} resolve non-planarities when they occur:
\begin{itemize}
  \item
    In an \emph{edge event}, an edge of the wavefront has shrunk to zero
    length.  This edge is removed from the wavefront, resulting in the two
    adjacent edges to become neighbors.
  \item In a \emph{split event}, a reflex vertex $v$ reaches another
    part of the wavefront.  (A vertex $v$ of $P$ is called reflex if
    the interior angle at $v$ is greater than $180^{\circ}$, and
    convex if it is less than $180^{\circ}$; tangential vertices with
    interior angle equal to $180^{\circ}$ can be ignored during the
    wavefront propagation.) The wavefront is split at this locus, and
    two separate polygons replace the previous polygon to restore
    planarity of the wavefront after the event.  Typically this will
    happen when $v$ reaches the interior of a wavefront edge.
    However, if $v$ reaches another vertex then more complex
    interactions are possible, resulting in non-elementary
    events~\cite{BieHP14b}.
\end{itemize}
Since the wavefront moves inwards within a polygon of finite extension, at
some point $\bar{t}$ in time all wavefront polygons will have collapsed, thus
resulting in $\WF_P(\bar{t})$ being the empty set. At this time $\bar{t}$ the
propagation process ends.

\paragraph{Straight Skeleton}

The \emph{straight skeleton} $\SK(P)$ is the geometric graph whose
edges are the traces of all vertices of $\WF_P(t)$ over the entire propagation
period.
In addition, if two parallel wavefront edges move into each other during the
wavefront propagation, then the portion common to them is added to the straight
skeleton as well while the portions that belong to only one of them remain in
the wavefront~\cite{Bie&15a}.
\Cref{fig:sk} shows wavefront polygons at different times and the resulting
straight skeleton.

To avoid ambiguities, one generally refers to the edges of the straight
skeleton as \emph{arcs} and reserves the term \emph{edges} for the input
polygon and the wavefront.  Likewise, the vertices of a straight skeleton
are called \emph{nodes}.

The straight skeleton of a polygon is a tree and
each interior node of $\SK(P)$ is of degree three for input in general
position such that only elementary edge and split events occur during the
wavefront propagation~\cite{Aic&95}.
Since the vertices of the wavefront move along angular bisectors of edges of
$P$, all arcs of the straight skeleton are straight-line segments.

\paragraph{Faces}

The wavefront fragments of the polygon edge $e$ at time $t$ are
contained in $\overline{e} + t \cdot n_e$, where $\overline{e}$ is the
supporting line of $e$ and $n_e$ is its inward facing unit normal.  We
denote by $e(t)$ the (possibly empty) set of these wavefront fragments
of edge $e$ at time $t$.  Every \emph{face} of the straight skeleton is
traced out by the fragments of exactly one input edge over time, i.e.,
$f(e) := \bigcup_{t \ge 0} e(t)$ for the face $f(e)$ of edge $e$. Furthermore,
it is known that $f(e)$ is monotone  with respect to $e$ \cite{Aic&95}.

\paragraph{Roof Model}

The roof model~\cite{Aic&95} raises the wavefront propagation into
three-space, with the third ($z$-)coordinate being the time $t$.  With $P$
embedded in the $xy$-plane $t=0$, the propagation of the wavefronts over time
forms a polytope over $P$.  This piecewise linear and continuous polytope
$R(P) := \bigcup_{t \ge 0} (\WF_P(t)\times \{t\})$ is called the \emph{roof}
of $P$.  This roof is a terrain, i.e., it is a $z$-monotone surface where each
line parallel to the $z$-axis intersects it at most once.

The roof model is a useful theoretical tool when dealing with straight
skeletons as it makes some proofs easier.  It is also directly useful
as a solution for modeling terrains or actual roofs of buildings.
See \Cref{fig:sk-roof} for an illustration.

\begin{figure}[ht!]
  \centering
  \includegraphics[page=1,width=0.7\textwidth]{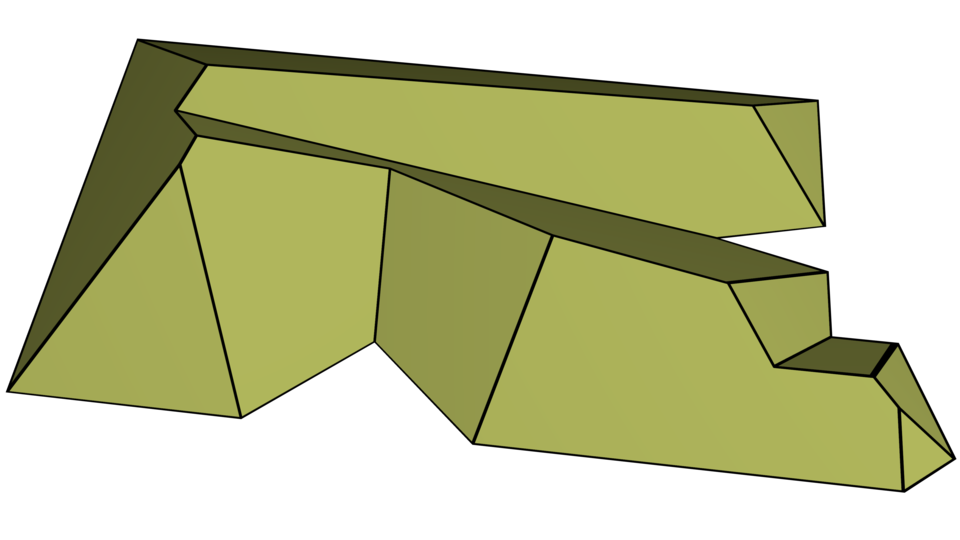}
    \caption{%
       The roof induced by the straight skeleton from \Cref{fig:sk}.
     }
  \label{fig:sk-roof}
\end{figure}

\paragraph{Straight Skeleton with Multiplicative Weights}

As a first generalization of unweighted straight skeletons, the
multiplicatively-weighted straight skeleton was introduced early on by
Aichholzer and Aurenhammer~\cite{AiAu98} and then by Eppstein and
Erickson~\cite{EpEr99}.
In the presence of multiplicative weights, wavefront edges no longer move at
unit speed. Rather, they move at different speeds depending on a weight
function $\sigma \colon E \to \R$ where $E$ is the edge-set of $P$.
That is, every edge $e$ is assigned its own constant speed $\sigma(e)$.
The wavefront fragments of $e$ are contained in
$\overline{e} + t \cdot \sigma(e) \cdot n_e$. See \Cref{fig:sk-mw}
for a multiplicatively-weighted straight skeleton; the corresponding roof is
shown in \Cref{fig:sk-mw-roof}.

\begin{figure}[htb!]
  \centering
  \includegraphics[page=1]{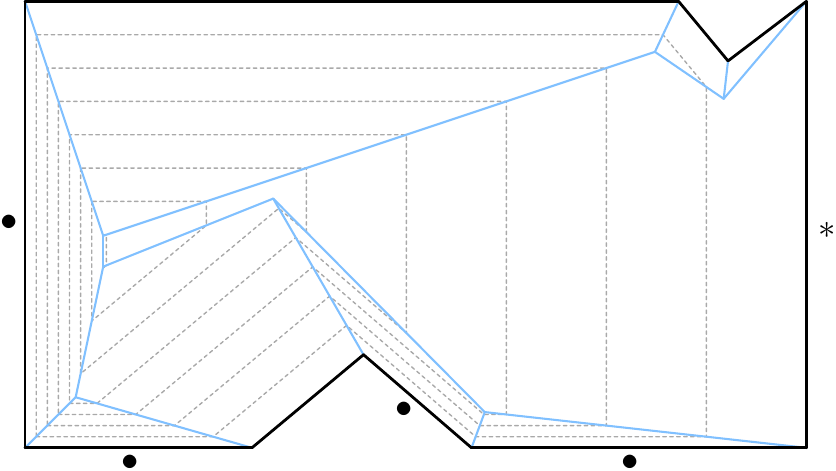}
    \caption{%
       Polygon with a multiplicatively-weighted straight skeleton
       and corresponding offsets.
       The edges marked with $\bullet$ and $*$ have a multiplicative
       weight of $\nicefrac{1}{3}$ and $3$, respectively.  All other
       edges have unit weight.
     }
  \label{fig:sk-mw}
\end{figure}

\begin{figure}[htb!]
  \centering
  \includegraphics[page=1,width=0.7\textwidth]{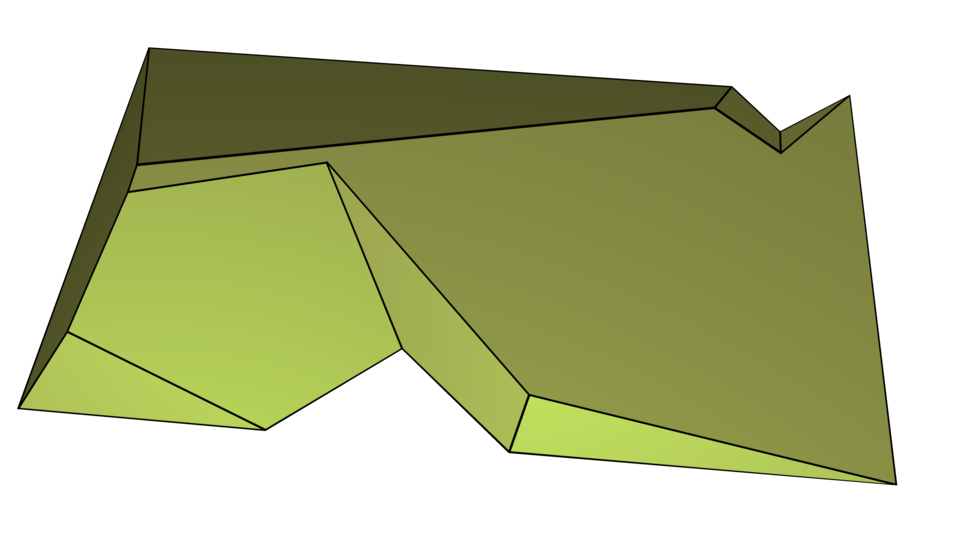}
    \caption{%
       The roof induced by the multiplicatively-weighted straight skeleton from
       \Cref{fig:sk-mw}.
     }
  \label{fig:sk-mw-roof}
\end{figure}

Although multiplicatively-weighted straight skeletons have been used
in various applications for quite a few years --- recall
\Cref{sec:prior_work} --- their characteristics were studied in full
mathematical rigor only recently by Biedl et al.~\cite{Bie&15a}.  If
all weights are required to be positive, then most of the well-known
properties of straight skeletons are preserved.  One prominent
exception is that a face need not be monotone to its defining input
edge any more; see, for example, the face traced out by the edge
marked with $*$ in \Cref{fig:sk-mw}.  For negative weights, $\SK(P)$
need not even be a tree, may contain crossings, and the roof need not
be a terrain~\cite{Bie&15a}. (A negative weight $\sigma(e)$ means that
the edge $e$ moves outwards with speed $|\sigma(e)|$.)  See
\Cref{fig:convexcrossing} for a multiplicatively-weighted straight
skeleton of a convex polygon that contains crossing arcs due to
negative weights.

\begin{figure}[htb!]
  \centering
  \includegraphics[page=1]{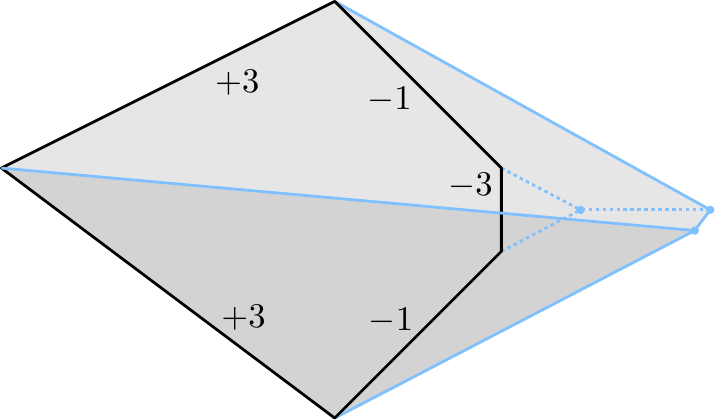}
    \caption{%
      We cannot allow negative multiplicative weights as then the straight
      skeleton may contain crossings~\cite{Bie&15a}.
     }
  \label{fig:convexcrossing}
\end{figure}

\section{Additively-Weighted Straight Skeletons}

\subsection{Definition}

Given a simple polygon $P$ and an additive-weight function
$\delta\colon E \to \R^+_0$, we define the additively-weighted wavefront
$\WF_{P,\delta}(t)$ as follows: As in the unweighted case, $\WF_{P,\delta}(0)$
is identical to $P$.
However, wavefront edges do not all start to move immediately. Rather,
an edge of the wavefront that is emanated from polygon edge $e$
will only start to move inwards at unit speed at time $\delta(e)$.
Therefore, the wavefront fragments $e(t)$ of $e$ are contained in
\[
 \overline{e} + \begin{cases}
                 0                        & \mbox{if } t < \delta(e), \\
                 (t-\delta(e)) \cdot n_e  & \mbox{else.}
                \end{cases}
\]
This can also be written as $\overline{e} + \max(0, t-\delta(e)) \cdot n_e$.

Since wavefront edges no longer move all at once, wavefront vertices will not
travel exclusively along bisectors of input edges.  If both incident wavefront
edges have not yet started to move, then the wavefront vertex will obviously
remain stationary.  If exactly one incident wavefront edge has started to move,
then the wavefront vertex will travel on the supporting line of the other;
see \Cref{fig:v-on-edge}.

\begin{figure}[htb!]
  \centering
  \begin{subfigure}[b]{0.30\textwidth}
    \centering
    \includegraphics[page=1]{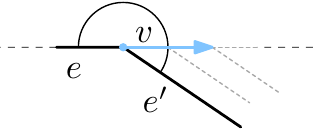}
    \caption[]{}
    \label{fig:v-on-edge:a}
  \end{subfigure}
  \hspace{0.1em}
  \begin{subfigure}[b]{0.30\textwidth}
    \centering
    \includegraphics[page=2]{v-on-edge}
    \caption[]{}
    \label{fig:v-on-edge:convex}
  \end{subfigure}
  \caption{%
    Vertex $v$ moving on the supporting line of wavefront edge $e$ that has not
    started to move yet.  The wavefront vertex $v$ is (a) reflex, (b) convex.
  }
  \label{fig:v-on-edge}
\end{figure}

During the propagation process the wavefront will see instances
of edge and split events, and it needs to be updated accordingly to restore
planarity after such an event.
Note that even edges and vertices that have not yet started to move can be
involved in both types of events. See, for instance, the edge in the top right
of \Cref{fig:skel-add}, which collapses before it starts moving. The wavefront
propagation process ends when all wavefront polygons have collapsed.

\begin{figure}[htb!]
  \centering
  \includegraphics[page=1]{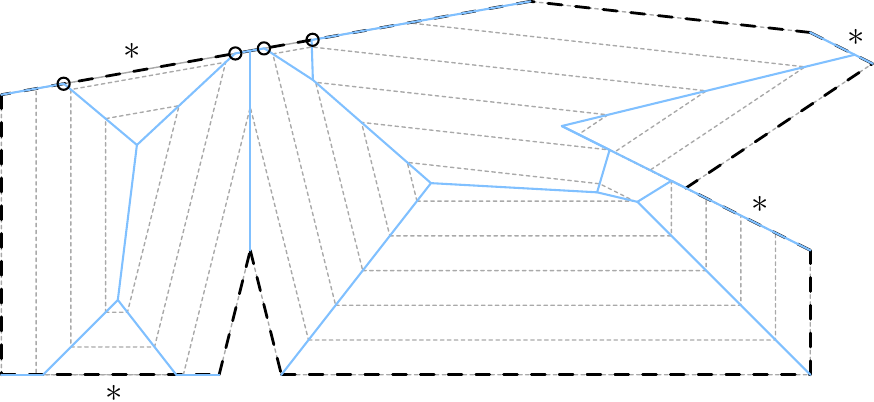}
    \caption{%
       Polygon (dashed) with an additively-weighted straight skeleton.
       The edges marked by $*$ have non-zero additive weights.
       A family of offset curves is shown in gray and dotted.
       The nodes marked with $\circ$ result from the speed-change event
       of the top left edge.
     }
  \label{fig:skel-add}

  \includegraphics[page=1,width=0.9\textwidth]{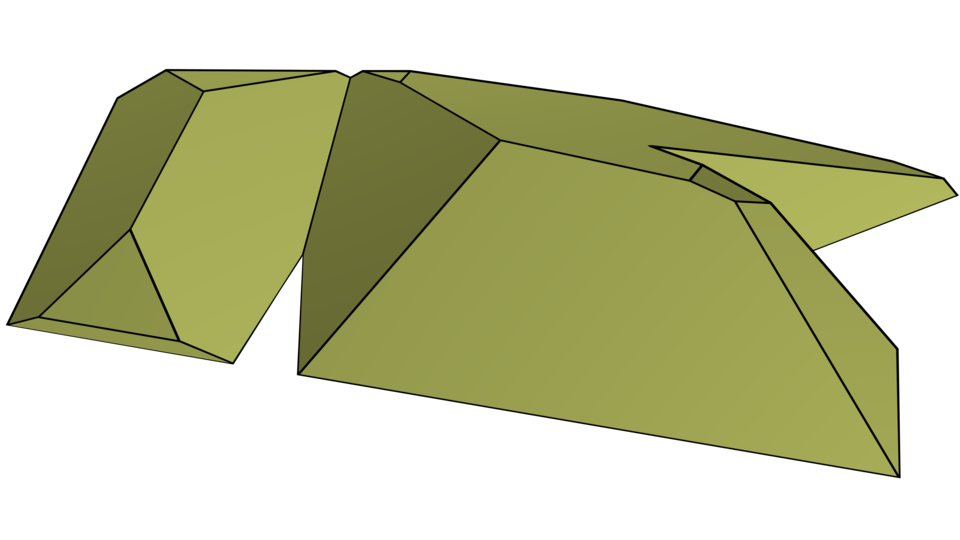}
    \caption{%
       Roof induced by the additively-weighted straight skeleton from
       \Cref{fig:skel-add}.
     }
  \label{fig:skel-add-roof}
\end{figure}

For simulating the propagation process, we consider an additional event:
We call the instance when an edge starts to move a \emph{speed-change event}.
For a wavefront edge emanated from polygon edge $e$ this will be at time
$\delta(e)$.  Speed-change events at time zero are called trivial, and we
usually may disregard them simply by setting up the kinetic wavefront with
already moving edges for time zero.

A speed-change event does not modify the combinatorial properties of the
wavefront, but it does change which elements move at which speeds.  In
particular, when a fragment starts to move, the direction and speed of its
incident wavefront vertices change.

The \emph{additively-weighted straight skeleton} $S(P, \delta)$ is then
defined as the geometric graph whose edges are the traces of vertices of
$\WF_{P,\delta}(t)$ over its propagation period.  Additionally, if two
parallel wavefront edges move into each other during the wavefront propagation,
then the portion common to them is also added to the straight skeleton.

As in the unweighted case, we call edges of $S(P, \delta)$ \emph{arcs}
and its vertices \emph{nodes}.
Similarly, we again call the loci traced out by the wavefront
segments $e(t)$ of edge $e$ the \emph{face $f(e)$ of $e$}, again given by
$f(e) := \bigcup_{t \ge 0} e(t)$.

\subsection{Properties}

\paragraph{Node Degrees}

In unweighted or multiplicatively-weighted straight skeletons, a node
will be of degree one when it is a leaf of the straight skeleton --- its
locus will then be at a vertex of the input polygon --- or of degree three
when it is the result of an elementary edge or split event.  Nodes with higher
degrees are also possible and are induced by non-elementary events where
more than three wavefront edges are involved~\cite{BieHP14b}.

In addition to these types of nodes, the additively-weighted straight skeleton
can have nodes of degree two. Degree-two nodes occur when a vertex of the
wavefront changes its velocity due to an incident wavefront edge starting to
move.  Such nodes always lie on the supporting line of the input edge whose
speed-change event caused it. See, for instance, the two pairs of nodes marked
with $\circ$ for the top left edge in \Cref{fig:skel-add} --- we have two
pairs rather than just one because the edge was involved in a split event
earlier on.

\paragraph{Nested Wavefronts}

The following \Cref{lem:nesteded_wavefronts} establishes that the wavefronts
of additively-weighted straight skeletons are nested inside each other. Hence
we get a situation similar to unweighted straight skeletons or straight
skeletons with positive multiplicative weights, except that the inclusions need not be proper,
since stationary edges of the wavefronts may overlap.

\begin{lemma}
  \label{lem:nesteded_wavefronts}
  Let $t_1,t_2\in\R_0^+$ with $t_1 < t_2$. Then $\WF_{P,\delta}(t_2)$ lies
  within $\WF_{P,\delta}(t_1)$, i.e.,
  $\WF_{P,\delta}(t_2) \subseteq \WF_{P,\delta}(t_1)$.
\end{lemma}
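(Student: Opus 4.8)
The plan is to reduce the set-inclusion statement to a pointwise monotonicity statement and then to show that the boundary of the enclosed region never moves outward. Write $W(t)$ for the closed region bounded by the polygons of $\WF_{P,\delta}(t)$, and fix an arbitrary point $p$ in the plane. It suffices to prove that the time set $T(p) := \{t \ge 0 : p \in W(t)\}$ is downward closed, i.e.\ that $t_2 \in T(p)$ and $t_1 < t_2$ force $t_1 \in T(p)$; since $\WF_{P,\delta}(t_2) \subseteq \WF_{P,\delta}(t_1)$ is precisely the assertion that every point of $W(t_2)$ already lies in $W(t_1)$, this is what we need. Phrased differently, once the propagating front has swept past $p$ it must never sweep back over $p$ again.

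The key observation driving this is that every wavefront edge moves only in the direction of its inward normal $n_e$, at speed $0$ while $t < \delta(e)$ and at speed $1$ once $t \ge \delta(e)$. Hence the closed half-plane $H_e(t)$ on the interior side of the supporting line $\overline{e} + \max(0, t - \delta(e)) \cdot n_e$ satisfies $H_e(t_2) \subseteq H_e(t_1)$ whenever $t_1 < t_2$. I would first treat a maximal time interval $[a,b]$ with no event in its interior, so that the wavefront is combinatorially fixed and each edge sweeps a shrinking half-plane. Near a regular boundary point the region is locally one such half-plane; near a convex wavefront vertex it is the intersection $H_e(t) \cap H_{e'}(t)$ of the two incident half-planes; and near a reflex vertex (interior angle $>180^\circ$) it is their union $H_e(t) \cup H_{e'}(t)$. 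Because each $H_e(t)$ only shrinks, and both intersections and unions of nested sets are again nested, the region shrinks locally at every boundary point, so no point can re-enter $W(t)$ during $(a,b)$.

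It then remains to check that the events at the interval endpoints do not enlarge the region. Edge events and the merging of two parallel edges merely delete a shrunken portion of the boundary, and a split event only pinches a polygon into two at a single locus; in each case the region immediately after the event is contained in (indeed equals, up to a set of measure zero) the region immediately before. The speed-change events specific to additive weights are purely kinetic: at the event instant the wavefront is unchanged as a point set and only the velocities of the affected vertices switch from the stationary to the moving regime, so $W(t)$ is continuous across them. Combining the event-free subintervals with the finitely many intervening events by transitivity of $\subseteq$ (taking closures where the front momentarily passes through $p$) yields $W(t_2) \subseteq W(t_1)$, and hence the claim. I expect the main obstacle to be the rigorous passage from these local, per-vertex statements to the global region inclusion across topology-changing events --- in particular, ruling out that a reflex vertex or a split could, even momentarily, carry the front back over a previously cleared point. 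The stationary edges arising from positive additive weights are exactly what makes some inclusions non-proper, and the argument above accommodates them automatically, since speed $0$ still means ``never moving outward.''
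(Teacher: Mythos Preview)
Your argument is correct and rests on the same mechanism as the paper's: every wavefront edge moves only in the direction of its inward normal (or not at all), so no boundary point can move outward and the enclosed region is monotone in $t$. The paper's proof is a three-sentence sketch that singles out the one subtle case---a vertex of a stationary edge sliding along that edge's supporting line---while you give a more structured version with half-plane monotonicity, a local convex/reflex case split, and an explicit check across events; this is more thorough but not a different route.
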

\begin{proof}
  Edges that are already moving at time $t_1$ keep moving towards the interior
  of $\WF_{P,\delta}(t_1)$, and edges that are still stationary do not move
  towards the outside either.

  Note that a vertex $v$ of a stationary edge $e$ may move.  However, if it
  does move, then it moves only towards the interior of the wavefront polygon
  $\WF_{P,\delta}(t_1)$ or along its boundary within the interior of $e$.  It
  never moves towards the outside of $\WF_{P,\delta}(t_1)$ since the other
  edge incident at $v$ does not move towards the outside of the wavefront
  polygon either.  This is illustrated in \Cref{fig:v-on-edge}.
\end{proof}

\paragraph{Roof Model}

The roof induced by an additively-weighted straight skeleton is defined similarly
to its unweighted pendant as
$R(P, \delta) := \bigcup_{t \ge 0} (\WF_{P,\delta}(t) \times \{t\})$.
Unlike the roof induced by an unweighted straight skeleton, it is clearly no
longer strictly $z$-monotone, since wavefront edges may stay on the same
supporting line during the propagation, resulting in vertical facets.  The
house depicted in \Cref{fig:house} has many such facets, namely the walls, as
all input edges have (different) additive weights assigned to them.  The
weight assigned to some edges is larger, resulting in some walls continuing
upwards while inclined roof facets already exist at the same height.

\begin{figure}[ht!]
  \centering

  \begin{subfigure}[b]{0.90\textwidth}
    \centering
    \includegraphics[page=1]{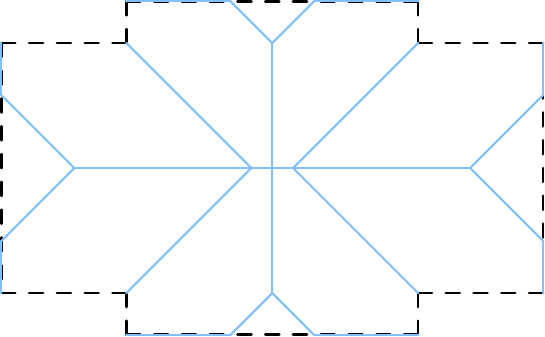}
    \caption[]{}
    \label{fig:house-floor-plan}
  \end{subfigure}

  \begin{subfigure}[b]{0.48\textwidth}
    \centering
    \includegraphics[page=3]{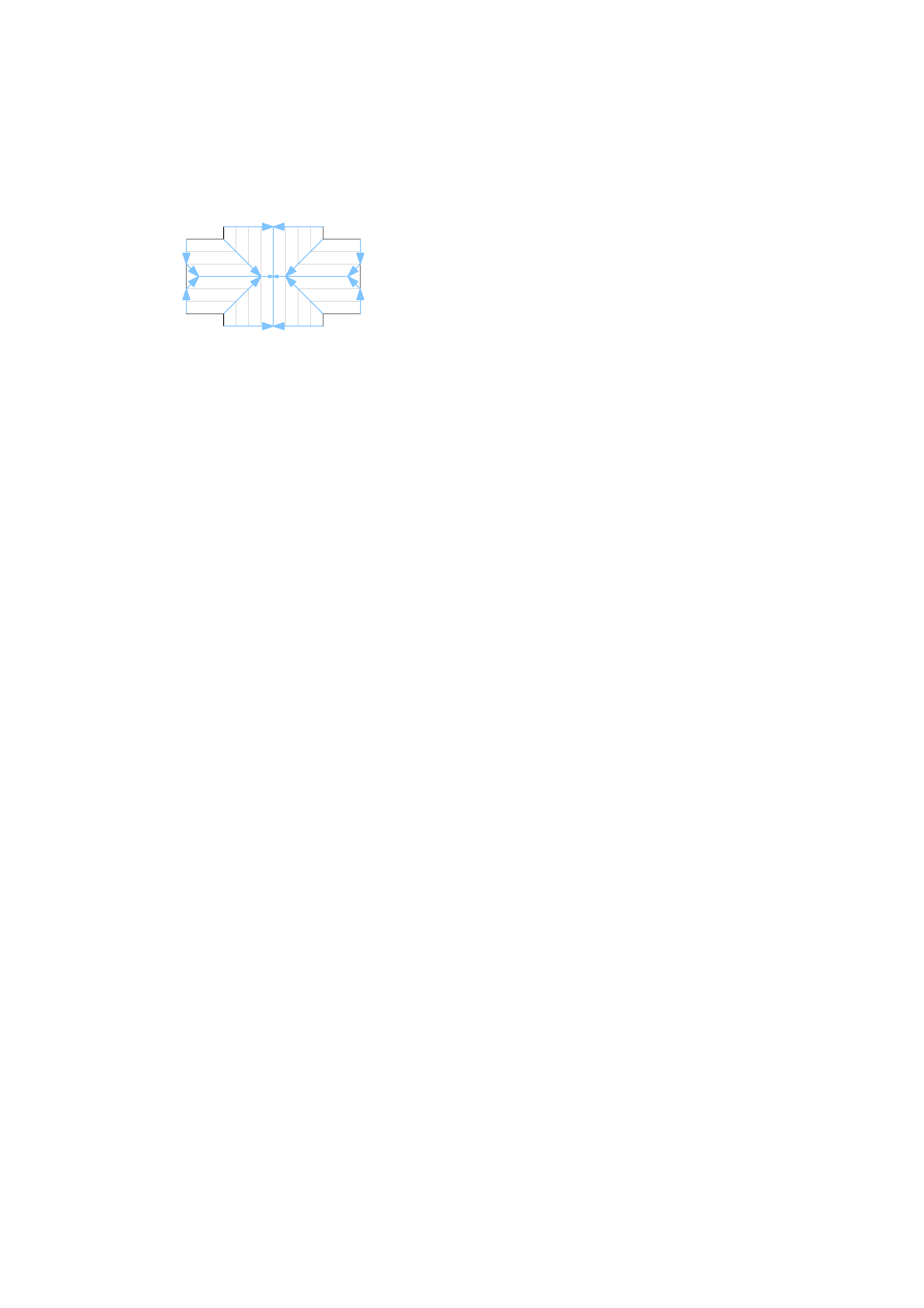}
    \caption[]{}
    \label{fig:house-constructed-roof}
  \end{subfigure}
  \hspace{0.1em}
  \begin{subfigure}[b]{0.48\textwidth}
    \centering
    \includegraphics[page=1,width=1.0\textwidth]{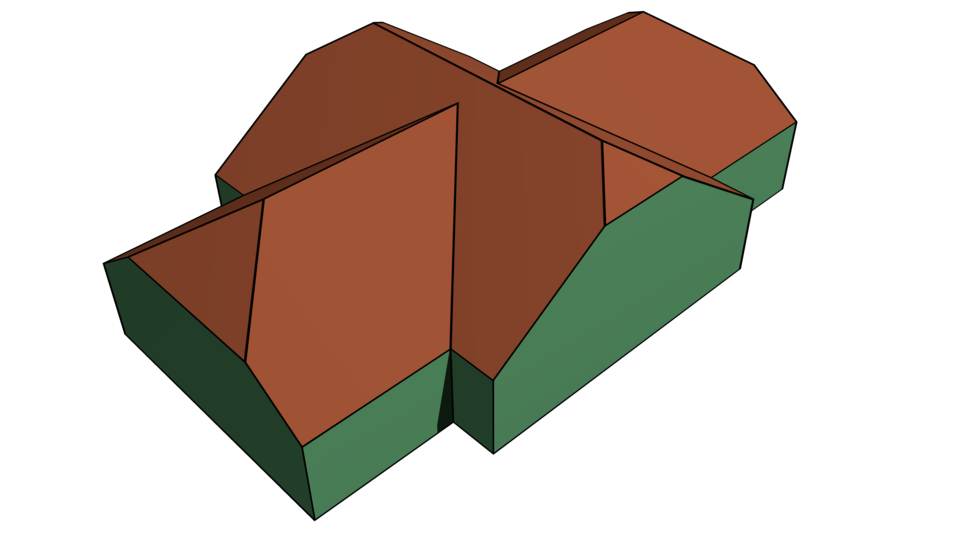}
    \caption[]{}
    \label{fig:house-render}
  \end{subfigure}

  \caption{%
    A house with a  roof induced by an additively weighted straight skeleton.
  }
  \label{fig:house}
\end{figure}

\begin{lemma}
  The roof $R(P, \delta)$ induced by an additively-weighted straight
  skeleton of $P$ is weakly $z$-monotone.
\end{lemma}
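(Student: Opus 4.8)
The plan is to prove weak $z$-monotonicity in the form that every line parallel to the $z$-axis meets $R(P,\delta)$ in a connected set --- the empty set, a single point, or a single vertical segment. For a fixed point $p=(x,y)$ of the plane, the vertical line through $p$ meets the roof in $\{p\}\times Z(p)$, where
\[
 Z(p) := \{\, t\ge 0 : p\in \WF_{P,\delta}(t)\,\},
\]
so it suffices to show that $Z(p)$ is an interval. The whole argument will rest on the nesting of wavefronts established in \Cref{lem:nesteded_wavefronts}.

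First I would pass from the wavefront curves to the planar regions they bound. Write $Q(t)$ for the closed region enclosed by $\WF_{P,\delta}(t)$, so that $\WF_{P,\delta}(t)=\partial Q(t)$. \Cref{lem:nesteded_wavefronts} says precisely that these regions are nested, $Q(t_2)\subseteq Q(t_1)$ whenever $t_1<t_2$. I would then split the membership of $p$ into
\[
 A(p) := \{\, t\ge 0 : p\in Q(t)\,\}, \qquad B(p) := \{\, t\ge 0 : p\in \operatorname{int} Q(t)\,\}.
\]
Nesting makes $A(p)$ a down-set (if $p\in Q(t)$ and $s<t$, then $p\in Q(s)\supseteq Q(t)$), hence an interval containing $0$. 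Since the interior operator is monotone under inclusion, the interiors are nested as well, so $B(p)$ is likewise a down-set and an interval containing $0$, with $B(p)\subseteq A(p)$.

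The crux is then the pointwise identity $Z(p)=A(p)\setminus B(p)$, which holds because $\partial Q(t)=Q(t)\setminus\operatorname{int} Q(t)$ for the closed region $Q(t)$. The set-difference of two nested intervals that both contain $0$ is a final segment of the larger one, hence again an interval; this is exactly the connectedness of $Z(p)$ we set out to prove. Informally, nesting forbids the wavefront from ever sweeping back across a point it has already left behind, so $p$ can lie on the wavefront only during one contiguous span of time --- the single vertical wall segment over $p$, possibly degenerate to a point when $p$ is swept instantaneously by a moving edge.

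The step I expect to demand the most care is not the monotonicity argument but the bookkeeping around degenerate boundaries. I must make sure that $\WF_{P,\delta}(t)$ really coincides with the topological boundary $\partial Q(t)$ even where the region pinches or is split into several wavefront polygons, where two parallel wavefront edges run into each other and contribute their common portion, and at the instants of speed-change events where an incident vertex changes direction. In each such case I would verify that the affected point still belongs to $Q(t)\setminus\operatorname{int} Q(t)$ only over one time interval, so that the identity $Z(p)=A(p)\setminus B(p)$, and with it the connectedness of $Z(p)$, is preserved. By contrast, the monotonicity of the interior costs nothing: any open set contained in $Q(t_2)\subseteq Q(t_1)$ is an open subset of $Q(t_1)$ and therefore contained in $\operatorname{int} Q(t_1)$.
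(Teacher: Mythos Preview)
Your proposal is correct and rests on exactly the same ingredient the paper uses: the nesting of wavefronts from \Cref{lem:nesteded_wavefronts}. The paper's own proof is a single sentence, ``This is a direct consequence of \Cref{lem:nesteded_wavefronts}.''

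One notational remark may save you effort. In this paper $\WF_{P,\delta}(t)$ already denotes the (union of the) closed wavefront polygons---your $Q(t)$---rather than just their boundary curves; this is why \Cref{lem:nesteded_wavefronts} can state the nesting as a plain set inclusion $\WF_{P,\delta}(t_2)\subseteq\WF_{P,\delta}(t_1)$. Read that way, the vertical line through $p$ meets $R(P,\delta)$ in $\{p\}\times A(p)$, and weak $z$-monotonicity is simply the observation that $A(p)$ is a down-set, which is immediate from nesting. Your $A(p)\setminus B(p)$ decomposition and the careful handling of boundary degeneracies are then unnecessary for what the paper claims, though they do establish the (slightly sharper) statement that the bounding \emph{surface} of the solid is itself weakly $z$-monotone.
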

\begin{proof}
  This is a direct consequence of \Cref{lem:nesteded_wavefronts}.
\end{proof}

For each edge $e$ of the polygon, the roof will have at least one facet,
namely the one incident to $e$.  If the segments of $e$ see a speed-change
event during the propagation process, one additional facet per segment
will be visible in the roof.
Note that the total number of facets is still linear in the number of
vertices of the input polygon since additional wavefront segments can
only be caused by split events, whose number is guaranteed to be at
most linear in the number of vertices.

\paragraph{Crossings, Planarity, and Connectedness}

Biedl et al.~\cite{Bie&15a} show that the mul\-ti\-plica\-tive\-ly-weighted
straight skeleton is free of crossings%
\footnote{%
Briefly, the set of crossing-free embeddings of a graph $G$ is the closure of
the set of planar embeddings of $G$ with respect to the topology induced by a
vertex-wise distance metric.
} %
for positive edge weights $\sigma$:  Any locus $p$ of a crossing would have to be covered twice
by the wavefront.  Thus, the line parallel to $z$ through $p$ would intersect the roof twice.
However, the roof is strictly $z$-monotone, and therefore no such point can exist.

This result extends to additive weights including zero, as the projection of
the weakly $z$-monotone roof to the $xy$-plane is likewise free of crossings.

\begin{lemma}
   \label{lem:no-crossings}
   The additively-weighted straight skeleton of a simple polygon is free of
   crossings.
\end{lemma}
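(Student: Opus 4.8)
The plan is to reduce the claim to a statement about the roof $R(P,\delta)$ and then invoke the weak $z$-monotonicity established above. The key observation, already flagged in the text preceding the lemma, is that a crossing of the straight skeleton in the $xy$-plane forces a point $p$ to be covered more than once by the wavefront over time. I would make this precise by arguing the contrapositive: if $S(P,\delta)$ contained a crossing, then two arcs traced by wavefront vertices would meet transversally at some point $p$ in the plane at two distinct times $t_1 \neq t_2$, so the vertical line through $p$ would intersect the roof $R(P,\delta)$ at the two distinct heights $t_1$ and $t_2$.

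The main subtlety, and the reason one cannot quote the multiplicatively-weighted argument of Biedl et al.\ verbatim, is that the roof here is only \emph{weakly} $z$-monotone rather than strictly $z$-monotone. A vertical line may legitimately meet the roof in a whole vertical segment, coming from the stationary (unmoved) portions of edges with positive additive weight --- the vertical walls visible in \Cref{fig:house}. Thus the plan is to distinguish the height at which $p$ is \emph{traced out} as a skeleton point from such vertical wall facets. First I would note that a crossing point $p$ lies in the relative interior of two arcs; each such arc is the trace of a moving wavefront vertex, so each passage of the wavefront through $p$ occurs at a single well-defined time, and the arc is a nondegenerate segment in the $xy$-plane rather than a vertical wall. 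Hence the two passages correspond to two genuinely distinct, non-vertical points on the roof lying on the same vertical line, which I would show still contradicts weak $z$-monotonicity: weak monotonicity forbids the vertical line from meeting the roof in two points that are separated by a gap or that both lie on non-vertical facets, and two distinct crossing arcs through $p$ produce exactly such a configuration.

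Concretely, the steps are: (1) assume for contradiction that an arc $a_1$ and an arc $a_2$ of $S(P,\delta)$ cross at an interior point $p$; (2) lift $p$ via each arc to heights $t_1$ and $t_2$ on the roof, where $t_i$ is the time the corresponding wavefront vertex reaches $p$; (3) argue $t_1 \neq t_2$, since if the two vertices reached $p$ at the same time they would coincide there and the arcs would meet at a node rather than cross, contradicting the definition of a crossing; (4) conclude that $R(P,\delta)$ contains the two distinct points $(p,t_1)$ and $(p,t_2)$ on non-vertical portions of the surface, which violates the weak $z$-monotonicity guaranteed by \Cref{lem:nesteded_wavefronts}.

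The step I expect to be the main obstacle is step (4): making rigorous why weak $z$-monotonicity --- which does permit vertical segments --- nonetheless rules out two distinct covering times at a crossing. The cleanest route is to argue at the level of wavefronts directly from \Cref{lem:nesteded_wavefronts} rather than through the roof geometry: the nesting $\WF_{P,\delta}(t_2)\subseteq\WF_{P,\delta}(t_1)$ for $t_1<t_2$ means the region swept is monotonically shrinking, so a point $p$ in the open region interior can be crossed by the moving wavefront boundary at most once before it is permanently in the exterior. A transversal crossing of two arcs at $p$ would require the wavefront boundary to sweep through $p$ at two separate instants, which the nesting forbids. I would therefore base the final contradiction on the nesting lemma itself, using the weak $z$-monotone roof only as intuitive motivation, since the nesting statement handles the overlapping-stationary-edge case without the awkward carve-out for vertical walls.
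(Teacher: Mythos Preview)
Your proposal is correct and follows essentially the same approach as the paper: the paper's entire argument is the single sentence preceding the lemma, namely that ``the projection of the weakly $z$-monotone roof to the $xy$-plane is likewise free of crossings,'' which is precisely the roof-lifting idea you spell out. Your treatment is more careful than the paper's --- you explicitly confront the weak-versus-strict monotonicity issue that the paper elides, and your suggestion to ground the contradiction directly in the nesting \Cref{lem:nesteded_wavefronts} rather than in the roof geometry is a sensible (and equivalent) way to sidestep the vertical-wall carve-out, since the roof's weak $z$-monotonicity is itself derived from that lemma.
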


Note, however, that we cannot infer strict planarity from being free of
crossings:  Assume $v$ is a wavefront vertex where one incident edge $e$ has
not yet started to move.  Let $e'$ be the other edge incident at $v$.
Then $v$ travels on the supporting line of $e$, and the direction of
this movement depends on the angle that $e$ spans with $e'$; see
\Cref{fig:v-on-edge}.

Now let $e'$ collapse in an edge event.  Let $e''$ be the new neighbor
of $e$ and let $v'$ be the wavefront vertex common to $e$ and $e''$.
At the time of the edge event, when $v'$ first exists, it will be in the same
locus as $v$, the common vertex of $e$ and $e'$ that it replaces in the
wavefront.
If the angle at $v$ previously was reflex and at $v'$ it is now convex,
then $v'$ will move in the opposite direction of  $v$.
This results in the arc being traced out by $v'$ to overlap the arc
already traced out by $v$; see \Cref{fig:v-on-edge-backtrack}.

\begin{figure}[ht!]
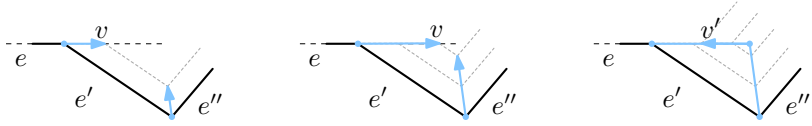

  \centering
  \begin{subfigure}[b]{0.30\textwidth}
    \centering
    \includegraphics[page=3]{v-on-edge}
  \end{subfigure}
  \hspace{0.1em}
  \begin{subfigure}[b]{0.30\textwidth}
    \centering
    \includegraphics[page=4]{v-on-edge}
  \end{subfigure}
  \hspace{0.1em}
  \begin{subfigure}[b]{0.30\textwidth}
    \centering
    \includegraphics[page=6]{v-on-edge}
  \end{subfigure}
  \caption{%
    After an edge event, a wavefront vertex may backtrack
    along an arc previously traced out.
  }
  \label{fig:v-on-edge-backtrack}
\end{figure}

Note that if the angle at $v$ was convex then $v$ moved on the supporting line
of $e$ within the interior of $e$ (\Cref{fig:v-on-edge:convex}).  The area
behind $v$ is outside of the wavefront polygon.  There are no events the
wavefront can undergo which would replace $v$ with a reflex vertex $v'$ that
moves in the opposite direction: such an event would cause $v'$ to move
outside of the wavefront polygon, violating \Cref{lem:nesteded_wavefronts}.
\Cref{fig:v-on-edge-no-backtrack} illustrates this fact.

\begin{figure}[ht!]
  \centering
  \includegraphics[page=8]{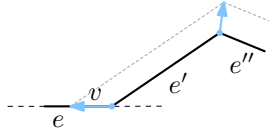}
  \caption{%
    For a convex vertex $v$ to be replaced by a reflex vertex $v'$ in
    an event, edge $e'$ would need to have a reflex vertex $v''$ with
    $e''$, with a sufficiently large interior angle at $v''$.
    However, such an edge $e'$ would then not shrink (since the arcs traced
    out by $v$ and $e' \cap e''$ diverge) and, therefore, the event
    would never happen.  }
  \label{fig:v-on-edge-no-backtrack}
\end{figure}

\begin{lemma}
 \label{lem:pathconnected}
  Let $\SK_{P, \delta}(t)$ be the portion of  $S(P, \delta)$ traced by
  the wavefront until time $t$, for some $t\ge 0$.
  If two points $p, q \in \SK_{P, \delta}(t)$
  are path-connected on $\SK_{P, \delta}(t) \cup \WF_{P,\delta}(t)$, then they
  are path-connected on $\SK(P, \delta)$.
\end{lemma}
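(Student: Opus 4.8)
The plan is to push the whole configuration forward in time until the wavefront has vanished. Write $A(s) := \SK_{P,\delta}(s) \cup \WF_{P,\delta}(s)$, and let $\bar{t}$ be the time at which the last wavefront polygon collapses, so that $\WF_{P,\delta}(\bar{t}) = \emptyset$ and $A(\bar{t}) = \SK(P,\delta)$. I would subdivide $[t,\bar{t}]$ at all events into intervals $t = \tau_0 < \tau_1 < \cdots < \tau_m = \bar{t}$ and construct, for each step, a continuous map $\Phi_i \colon A(\tau_i) \to A(\tau_{i+1})$ that restricts to the identity on $\SK_{P,\delta}(\tau_i)$. Their composition $\Phi := \Phi_{m-1} \circ \cdots \circ \Phi_0$ is then continuous, maps $A(t)$ into $A(\bar{t}) = \SK(P,\delta)$, and fixes every point of $\SK_{P,\delta}(t)$, because the traced skeleton only grows and hence $\SK_{P,\delta}(t) \subseteq \SK_{P,\delta}(\tau_i)$ for all $i$. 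Since a continuous map carries paths to paths, a path from $p$ to $q$ inside $A(t)$ is mapped to a path from $\Phi(p) = p$ to $\Phi(q) = q$ inside $\SK(P,\delta)$, which is exactly the assertion.

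It remains to build the single-step maps. On a subinterval $[s,s'] := [\tau_i,\tau_{i+1}]$ the wavefront keeps its combinatorial structure and every wavefront vertex moves along one line. I set $\Phi_i$ to be the identity on $\SK_{P,\delta}(s)$. For a wavefront edge $E$ whose two endpoints trace the arc fragments $\alpha$ and $\beta$ (from their positions at time $s$ to those at time $s'$), both contained in $\SK_{P,\delta}(s')$, I map the old edge $E(s)$ onto the concatenation $\alpha \cdot E(s') \cdot \beta^{-1}$ by any continuous surjection that fixes the two endpoints of $E(s)$. By \Cref{lem:nesteded_wavefronts} the wavefront only moves inward, so the quadrilateral swept by $E$ is bounded exactly by $E(s)$, by $E(s') \subseteq \WF_{P,\delta}(s')$, and by the fragments $\alpha,\beta \subseteq \SK_{P,\delta}(s')$; hence this image lies in $A(s')$. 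As the \emph{fold} fixes the endpoints of each wavefront edge, the maps defined on neighbouring edges agree at their shared wavefront vertex and there coincide with the identity prescribed on $\SK_{P,\delta}(s)$ (whose only contact with the wavefront, generically, is at those vertices), so the pieces glue to a continuous map on all of $A(s)$. Across the subdivision point $\tau_{i+1}$ the wavefront positions vary continuously and the event locus --- the vanishing edge, the splitting point, or the common segment of two colliding parallel edges --- is precisely a piece that passes from the wavefront into $\SK_{P,\delta}$; this can only merge components, so the construction extends through the event.

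The step I expect to be the real obstacle is the degenerate configuration flagged just before the lemma. When an incident edge has not yet started to move, a wavefront vertex travels along that edge's supporting line, and after an edge event a replacement vertex may backtrack along an arc that was already traced (\Cref{fig:v-on-edge-backtrack}). In that situation the moving feature sweeps no area, and a stationary wavefront edge can be collinear with --- and even overlap --- previously traced arcs, so the naive fold can conflict with the identity that $\Phi_i$ is required to be on $\SK_{P,\delta}(s)$. I would handle this case separately by working inside the shared supporting line $\ell$: there both $A(s)$ and $A(s')$ restrict to unions of collinear segments, one checks that the newly traced part lies in $A(s) \cap \ell$ and that $\SK_{P,\delta}(s)\cap\ell \subseteq A(s')\cap\ell$, and I would define $\Phi_i$ on this line-part as a continuous map into $A(s')$ that is the identity on $\SK_{P,\delta}(s)\cap\ell$ and still matches the transversal folds at the wavefront vertices. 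Producing this map and verifying its continuity despite the overlaps is the crux; once it is in place, the composition argument of the first paragraph completes the proof.
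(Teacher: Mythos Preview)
Your approach is correct in outline but takes a substantially heavier route than the paper. The paper does not build any continuous maps $\Phi_i$; it simply argues, event by event, that path-connectivity in $\SK_{P,\delta}(s)\cup\WF_{P,\delta}(s)$ cannot be destroyed as $s$ increases. For edge and split events this is inherited directly from the multiplicatively-weighted case (Lemma~13 of~\cite{Bie&15a}): whenever wavefront material is removed, a skeleton node is created at that locus and all involved wavefront pieces remain attached to it. The only new ingredient for additive weights is the speed-change event, and there the observation is trivial --- the wavefront is combinatorially unchanged, so any path through it survives. That is the entire proof.

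What your construction buys is an explicit witness: a continuous map $A(t)\to\SK(P,\delta)$ fixing $\SK_{P,\delta}(t)$, which is strictly more than path-connectivity requires. The price is exactly the difficulty you flag in your last paragraph. When a stationary wavefront edge overlaps already-traced arcs (the backtracking scenario of \Cref{fig:v-on-edge-backtrack}), your ``fold'' and the identity on $\SK_{P,\delta}(s)$ must be reconciled on a one-dimensional overlap, and making that rigorous is genuinely fiddly. The paper's argument sidesteps this entirely: it never needs the map, only the fact that no event disconnects anything, and overlapping arcs do not disconnect --- if anything they add redundancy. So the degeneracy you single out as ``the crux'' is an artefact of your method, not of the lemma itself; you can avoid it altogether by dropping the explicit $\Phi_i$ and arguing connectivity directly.
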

\begin{proof}
  This is shown for multiplicatively-weighted straight skeletons in
  Lemma 13 of \cite{Bie&15a}.  We extend their proof here.

  Connectivity of $p$ and $q$ is not broken despite the changes to the
  wavefront caused by edge and split events.  Whenever an event removes
  elements from the wavefront, a straight-skeleton node is created to
  which arcs are incident that are traced out by vertices of each
  element involved.  Likewise, after a split event, all resulting
  wavefront components are connected to the straight skeleton node that
  witnessed that event.

  Therefore, it only remains to show that connectivity is maintained
  across speed-change events.
  This also holds, as such an event does not change the combinatorial
  properties of the wavefront.  It will only result in vertices
  moving at different velocities.  Thus, a path between points $p$ and $q$
  of $\SK_{P, \delta}(t)$ that goes over wavefront elements cannot be
  split by changes to the wavefront.
\end{proof}

\begin{lemma}
   \label{lem:connected}
   The additively-weighted straight skeleton of a simple polygon $P$ is
   connected.
\end{lemma}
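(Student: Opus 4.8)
The plan is to prove connectivity point-by-point and to route every argument through \Cref{lem:pathconnected}, which converts connectivity observed through the wavefront at an intermediate time into connectivity in the finished skeleton. Given two arbitrary points $p,q\in S(P,\delta)$, each is traced by a wavefront vertex at a single finite instant; let $t_p,t_q$ be these instants and put $T=\max(t_p,t_q)$, so that $p,q\in\SK_{P,\delta}(T)$. My goal reduces to showing that $p$ and $q$ are path-connected on $\SK_{P,\delta}(T)\cup\WF_{P,\delta}(T)$: once that is established, \Cref{lem:pathconnected} yields that they are path-connected on $S(P,\delta)$, and since $p,q$ were arbitrary the skeleton is connected. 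The merit of this detour is that it replaces a delicate limiting argument at the collapse time by a statement at a fixed time $T$ where the wavefront is still present and easy to reason about.

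Everything therefore hinges on the invariant that $\SK_{P,\delta}(t)\cup\WF_{P,\delta}(t)$ is connected for every $t$. I would prove this by induction over the finitely many events of the propagation. In the base case $t=0$ the traced part $\SK_{P,\delta}(0)$ is empty and $\WF_{P,\delta}(0)=P$ is a single simple polygon, hence connected. During an event-free interval the wavefront deforms continuously and each arc currently being traced grows out of and stays attached to its moving wavefront vertex, so no piece can separate. Across the events I can borrow verbatim the case analysis already made for \Cref{lem:pathconnected}: an edge or split event produces a skeleton node incident to the arcs of all wavefront vertices it involves, with the surviving wavefront component(s) still attached to that node, and a speed-change event alters only velocities, not the combinatorial structure. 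In each case the mixed set remains connected, and the induction carries the invariant up to time $T$.

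The step I expect to be the main obstacle is the split event, because it is the only one that increases the number of wavefront polygons and thus the only one that could conceivably sever the mixed set into pieces; the resolution is that the two new wavefront components both meet the skeleton node witnessing the split and stay joined through $\SK_{P,\delta}(t)$, and the identical remark disposes of the higher-degree nodes arising from non-elementary events. A second point deserving care is the choice of $T$: if one of $p,q$ happened to be traced exactly at the final collapse time $\bar{t}$ then $\WF_{P,\delta}(T)$ is empty and the appeal to \Cref{lem:pathconnected} degenerates, so in that situation I would first slide $p$ (resp.\ $q$) along its incident arc to a nearby point traced strictly before $\bar{t}$, connect those interior points, and reattach the endpoints afterwards. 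Finally, I would note explicitly that the non-strict planarity discussed before the lemma --- the backtracking of a vertex along an already-traced arc and the overlapping of stationary edges permitted by \Cref{lem:nesteded_wavefronts} --- is irrelevant here, since the claim asserts only topological path-connectivity, for which coincident arcs can do no harm.
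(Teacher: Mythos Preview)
Your argument is correct, and it is built around the same key lemma as the paper, but you take a noticeably longer route. The paper applies \Cref{lem:pathconnected} once, at time $t=0$: there $\SK_{P,\delta}(0)$ consists only of the polygon vertices, $\WF_{P,\delta}(0)=P$ is a single simple polygon, so all initial skeleton nodes are trivially path-connected on $\SK_{P,\delta}(0)\cup\WF_{P,\delta}(0)$, and \Cref{lem:pathconnected} then places them in one component of $\SK(P,\delta)$. Connectedness of the whole skeleton follows because every arc traces back, through the nodes that created it, to a polygon vertex.

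You instead choose $T=\max(t_p,t_q)$ and must then maintain, by induction over all events, the invariant that $\SK_{P,\delta}(t)\cup\WF_{P,\delta}(t)$ stays connected up to time $T$. That inductive case analysis essentially replays the content of \Cref{lem:pathconnected}, so you are doing the same work twice. The upside of your version is that it handles arbitrary $p,q$ directly, without the implicit ``every skeleton point reaches a polygon vertex along arcs'' step that the paper leaves unstated; the downside is the extra bookkeeping (the induction, and the special treatment of $T=\bar t$). If you want the paper's economy, simply invoke \Cref{lem:pathconnected} at $t=0$ and add one sentence explaining why every point of $\SK(P,\delta)$ is arc-connected to some vertex of $P$.
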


\begin{proof}
  Initially, at time zero, the straight skeleton contains only
  disconnected nodes.  These are at the vertices of $P$ and
  therefore also at the wavefront vertices at time zero.
  The wavefront $\WF_{P,\delta}(0)$ is a single polygon identical
  to $P$, and so all nodes of $\SK_{P, \delta}(0)$ are connected via
  $\WF_{P,\delta}(0)$.
  By \Cref{lem:pathconnected}, therefore, $\SK(P, \delta)$ is connected.
\end{proof}

\paragraph{Faces}

As for unweighted and multiplicatively-weighted straight skeletons, for each
edge $e$ of $P$ we defined its face as $f(e) := \bigcup_{t \ge 0} e(t)$, where
$e(t)$ is the set of segments of the wavefront at time $t$ that were emanated
by $e$.  Initially, $e(0)$ will consist of only one segment that coincides
with $e$, but as the wavefront propagates, segments may get split and segments
may get dropped from $e(t)$ when they collapse.  However, at no time will a
segment just jump into existence.  Hence each face is connected.

Note, however, that $f(e)$ is not necessarily a simple polygon for edges
that do not immediately start to move.  The faces in \Cref{fig:skel-add} that
correspond to the edges with non-zero additive weights  demonstrate
this fact.  In clockwise order from the top left, we have a face whose
interior is disconnected, a face with an empty interior because its
corresponding edge collapsed before it started to move, and a face whose
interior is not adjacent to $e$ itself.

\begin{lemma}
  A face of an additively-weighted straight skeleton is connected.  Its
  interior may be empty or disconnected.  The interior need not be adjacent to
  the edge which emanated the face.
\end{lemma}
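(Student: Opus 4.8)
The plan is to treat the universal part of the statement --- that $f(e)$ is connected --- separately from the three existential parts, which merely assert that empty, disconnected, and non-adjacent interiors \emph{can} occur and are therefore settled by exhibiting the configurations already present in \Cref{fig:skel-add}. For the connectedness claim I would reduce it to a single assertion: every point of $f(e)$ is path-connected, inside $f(e)$, to the initial edge $e$. Since $e(0)=\{e\}$ gives $e\subseteq f(e)$, and $e$ is itself connected, this immediately yields connectedness of the whole face.

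To prove that reduction, take an arbitrary $p\in f(e)$, pick a time $t_p$ and a wavefront fragment $s\subseteq e(t_p)$ with $p\in s$, and trace $s$ backwards in time. The structural fact I would rely on --- stated informally in the paragraph preceding the lemma --- is that fragments of $e$ never appear spontaneously: a fragment present at time $\tau>0$ is either inherited from slightly earlier, or is one of the two pieces produced when an older fragment of $e$ is cut by a split event. Reading every split backwards as a merge, I obtain for each $\tau\in[0,t_p]$ a uniquely determined, nonempty ancestor fragment $s(\tau)$ with $s(t_p)=s$ and $s(0)\subseteq e$; the ancestor never vanishes on the way down, because a fragment collapsing is a purely forward phenomenon.

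It then remains to observe that the swept set $\bigcup_{\tau\in[0,t_p]} s(\tau)$ is a connected subset of $f(e)$. It lies in $f(e)$ by definition, and it is connected because the endpoints of $s(\tau)$ vary continuously: along each arc the meeting point of two fragments moves continuously, at an edge event the replacing vertex is born at the locus of the vertex it supersedes, and a speed-change event alters velocities but not positions. Hence $p$ is joined within $f(e)$ to a point of $s(0)\subseteq e$, completing the argument. For the remaining claims I would read off \Cref{fig:skel-add}: an empty interior occurs when the fragment of $e$ collapses before time $\delta(e)$, so that $e$ sweeps no area and $f(e)$ degenerates onto $\overline{e}$; a disconnected interior occurs when a stationary fragment is split into two pieces that later move apart, the two two-dimensional patches being joined only through the one-dimensional part of $f(e)$ carried on $\overline{e}$; and a non-adjacent interior occurs at a reflex endpoint of $e$, where the stationary vertex slides along $\overline{e}$ into the wavefront interior past the original extent of $e$ --- which is permitted by \Cref{lem:nesteded_wavefronts} and illustrated in \Cref{fig:v-on-edge} --- so that the surviving, now moving fragment sweeps area adjacent to that extension rather than to $e$.

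I expect the main obstacle to be making the backward trace fully rigorous: one has to pin down that the ancestor fragment is well defined and nonempty throughout $[0,t_p]$ --- genuinely excluding spontaneous creation and correctly accounting for split events as merges --- and that the induced sweep is connected across every event rather than only within each fixed time slice. Once that bookkeeping is in place, the connectedness of $f(e)$ is immediate, and the three possibilities for the interior are a matter of inspecting the figure.
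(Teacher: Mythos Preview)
Your proposal is correct and follows essentially the same approach as the paper: the connectedness claim rests on the observation that wavefront fragments of $e$ never appear spontaneously (only split or collapse), and the three existential claims are settled by pointing to the configurations in \Cref{fig:skel-add}. The paper states this in a single informal paragraph preceding the lemma rather than in a proof environment; your backward-trace formulation is a more detailed rendering of the same idea, with the minor caveat that at a split event read backwards the ancestor fragment strictly enlarges, so one endpoint jumps rather than varying continuously --- but since the pre-split fragment contains the post-split piece, the swept union remains connected and your argument goes through.
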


In the unweighted straight skeleton, the face of an edge $e$ is a
monotone polygon with respect to the supporting line of $e$.
This is, however, not always true for additively-weighted straight
skeletons:

\begin{lemma}
  A face of an additively-weighted straight skeleton need not be
  monotone with respect to the edge that emanated it.
\end{lemma}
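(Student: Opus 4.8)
Since the claim is a negative one, the plan is to exhibit a single polygon $P$ with a weight function $\delta$ and an edge $e$ whose face $f(e)$ fails to be monotone with respect to $\overline{e}$; the face of the top-left weighted edge in \Cref{fig:skel-add} is meant to be exactly such an example. Recall that $f(e)$ is monotone with respect to $\overline{e}$ precisely when every line $\ell$ orthogonal to $\overline{e}$ meets $f(e)$ in a connected set. I would therefore produce a configuration together with one such line $\ell$ for which $\ell \cap f(e)$ falls apart into two components.

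The mechanism I would exploit is that an additive weight lets $e$ remain stationary long enough to be cut by a split event while it still lies on $\overline{e}$. First I would choose $\delta(e)$ large and arrange a reflex wavefront vertex of some other part of the boundary to run into the stationary fragment of $e$, splitting it into a left fragment $A$ and a right fragment $B$. Up to time $\delta(e)$ both fragments stay on $\overline{e}$, so at that moment they are two disjoint sub-segments of $\overline{e}$, say $A$ covering $[\ell_A,r_A]$ and $B$ covering $[\ell_B,r_B]$ with $r_A<\ell_B$. From then on both fragments move off $\overline{e}$ at unit speed and sweep out the two two-dimensional lobes of the single face $f(e)$; because the lobes originate in two different wavefront polygons, \Cref{lem:nesteded_wavefronts} keeps them disjoint above $\overline{e}$, so they meet only along the stationary one-dimensional stretch of $\overline{e}$ between them. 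This is already the ``disconnected interior'' phenomenon noted for \Cref{fig:skel-add}.

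Next I would pick the orthogonal line $\ell$ through a coordinate $x^\ast\le r_A$ (so $\ell$ already meets lobe $A$ on $\overline{e}$) that lies strictly left of $\ell_B$ (so $\ell$ misses lobe $B$ on $\overline{e}$), and tune the two neighbouring edges so that, as the fragments rise, the right boundary of $A$ recedes to the left past $x^\ast$ at some height $h_A$, while the left boundary of $B$ advances to the left as far as $x^\ast$ only at a strictly greater height $h_B>h_A$. With such a choice $\ell$ meets $f(e)$ in the low interval contributed by $A$ (from $\overline{e}$ up to $h_A$) and in the high interval contributed by $B$ (from $h_B$ upward), while the intermediate stretch of $\ell$ lies in neither lobe. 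Since, in the configuration I build, each lobe meets $\ell$ in a single interval and the only remaining part of $f(e)$ is the stationary stretch of $\overline{e}$ at height zero, the set $\ell\cap f(e)$ has two components and $f(e)$ is not monotone.

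The main obstacle is not the cross-section argument but the realizability of this combinatorial picture: I must give an explicit PSLG and verify the vertex trajectories so that fragment $A$ really narrows across $x^\ast$ before fragment $B$ widens across it, i.e.\ that $h_A<h_B$ genuinely holds and the gap is not closed up by the connecting portion of the face. It is worth noting why no analogous example exists without additive weights: there a split of $e$ happens while $e$ is already moving, so the two lobes share a positive-width ``stem'' and $f(e)$ stays monotone. It is precisely the delayed onset of motion that lets the two lobes touch only along $\overline{e}$ and thus permits the vertical gap. I would finish by pointing to \Cref{fig:skel-add} as the concrete witness.
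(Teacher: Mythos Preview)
Your approach is the same as the paper's: exhibit a single concrete counterexample and point to \Cref{fig:skel-add}. The paper's entire proof is literally the sentence ``See the topmost face in \Cref{fig:skel-add}'', with no mechanism spelled out, so your discussion of how a split on a stationary edge yields two lobes whose $x$-projections overlap is additional (and reasonable) explanatory content rather than a different argument. One small caution: the paper cites the \emph{topmost} face, while you build your story around the top-left weighted edge (the one with disconnected interior); a disconnected interior alone does not force non-monotonicity unless the lobes actually overlap in projection---precisely the realizability issue you flag---so make sure the face you analyse is the one the figure really witnesses.
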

\begin{proof}
  See the topmost face in \Cref{fig:skel-add}.
\end{proof}

\section{Discussion}

\subsection{Roof and Terrain Construction}

Once the footprint of the building upon which we want to construct a roof is
fixed, we have two major options to influence the shape of a
straight-skeleton-induced roof.

For each wall of the building we can specify at which height the roof
should start to slope inwards, if at all.  The additive weight $\delta(e)$ that
we assign to each input edge $e$ translates exactly to that height; recall
\Cref{fig:house}.  If $\delta(e)$ is sufficiently large such that during
the wavefront propagation its corresponding wavefront edge collapses before
that time, then the roof will not contain a corresponding sloped roof facet;
see \Cref{fig:simple-roof}. Note that the ridge of the roof of
\Cref{fig:simple-roof} is perpendicular to the two longer walls of the
house, thereby extending from one of them to the other. (This is a feature
which cannot be achieved by unweighted straight skeletons.)

\begin{figure}[htb!]
  \centering

  \begin{subfigure}[b]{0.48\textwidth}
    \centering
    \includegraphics[page=1]{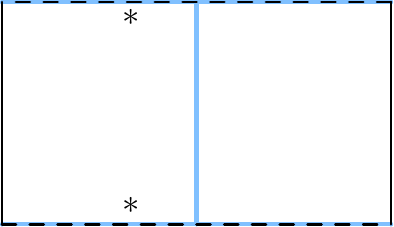}
    \caption[]{}
  \end{subfigure}
  \hspace{0.1em}
  \begin{subfigure}[b]{0.48\textwidth}
    \centering
    \includegraphics[page=1,width=1.0\textwidth,height=0.7\textheight,keepaspectratio]{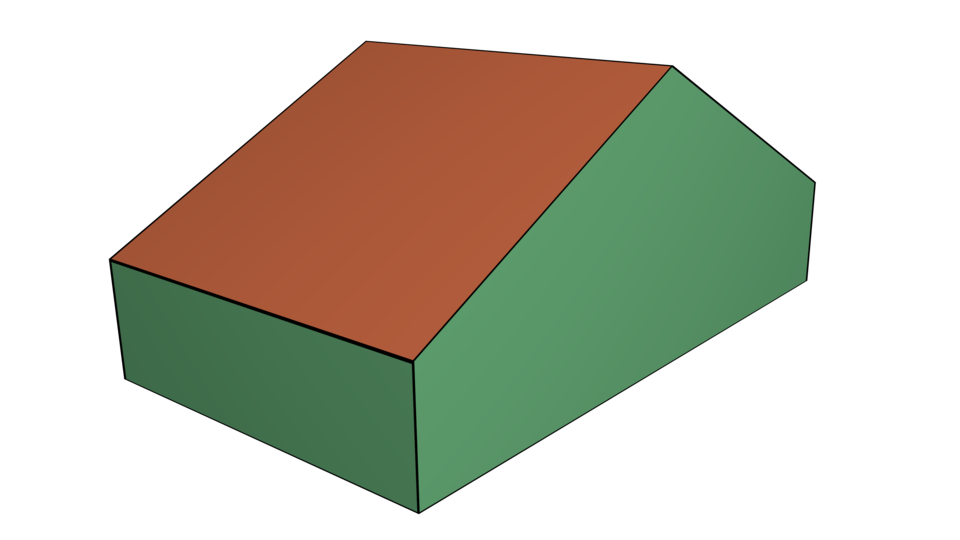}
    \caption[]{}
  \end{subfigure}

  \caption{%
    A very simple house: The footprint is just a rectangle.  The
    additive weights assigned to the two longer edges (marked with
    $*$) are sufficiently large such that the induced roof never has
    sloped facets corresponding to these input edges.  }
  \label{fig:simple-roof}
\end{figure}

By choosing appropriate multiplicative weights for the input edges, we
can influence the slopes of the facets of the roof. It is easy to see
that the tangent of the angle between a roof facet and the $xy$-plane
is inversely proportional to the multiplicative weight of the
corresponding input edge; see \Cref{fig:simple-m-roof}. Note that the
vertical walls of that house are not induced by the
multiplicatively-weighted straight skeleton. They could, however, be
generated by combining both additive and multiplicative weights! (See below.)

\begin{figure}[htb!]
  \centering

  \begin{subfigure}[b]{0.48\textwidth}
    \centering
    \includegraphics[page=1]{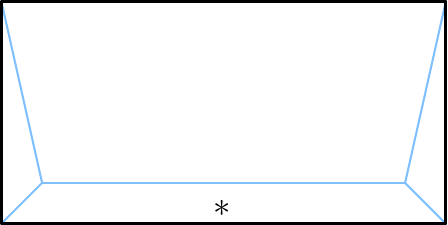}
    \caption[]{}
  \end{subfigure}
  \hspace{0.1em}
  \begin{subfigure}[b]{0.48\textwidth}
    \centering
    \includegraphics[page=1,width=1.0\textwidth]{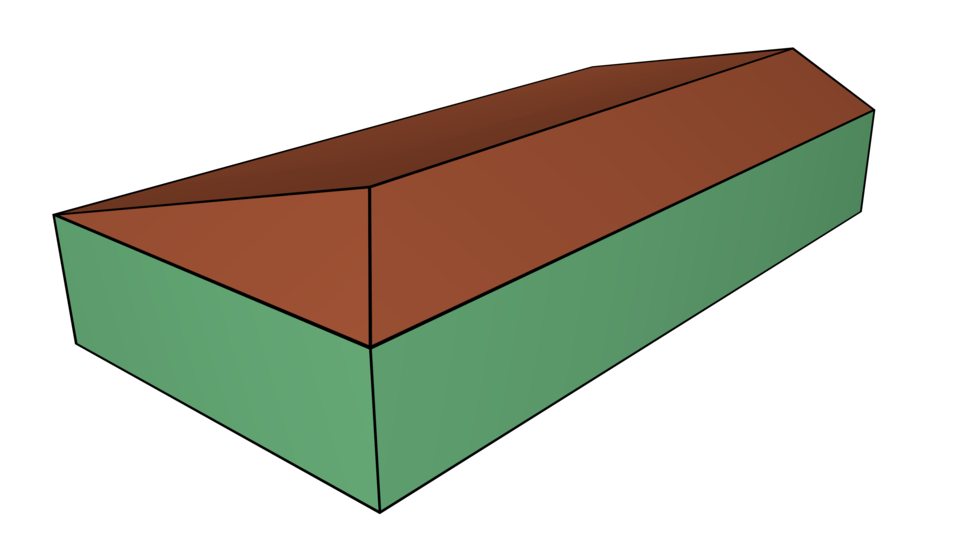}
    \caption[]{}
  \end{subfigure}

  \caption{%
    Again a simple footprint of a house.  The input edge marked with $*$ has
    a multiplicative weight less than that of the other edges.  Therefore,
    the corresponding facet in the roof is steeper.}
  \label{fig:simple-m-roof}
\end{figure}

\subsection{Generalizations}

Several generalizations seem natural.
First, we can combine multiplicative weights with additive weights.
Recall that in none of the arguments and proofs for the additively-weighted
straight skeleton we had relied on the speed of all moving edges to be equal
to one, or even just to be uniform.
As long as all speeds are non-negative, the multiplicatively- and
additively-weighted straight skeleton is well-defined and possesses the
properties laid out in the previous section.  Negative multiplicative weights
cannot be allowed since, e.g., \Cref{lem:nesteded_wavefronts} would not hold
and the straight skeleton need not even be a tree~\cite{Bie&15a}.

Negative additive weights, however, can be handled without much difficulty:
The wavefront propagation simply starts at the time that corresponds
to the smallest (negative) weight.
The straight skeleton itself is invariant to the addition of a constant value
to the additive weights of all edges --- it just results in the wavefront
propagation starting sooner or later. Hence, we could also handle negative
additive weights by subtracting the smallest negative weight from all weights,
thus running the wavefront propagation for only non-negative weights.  In
terms of the roof model, such additions merely mean shifting the whole
structure along the $z$-axis.

Second, as for unweighted straight skeletons, multiplicatively- and
additively-weighted straight skeletons can be defined not only for simple
polygons but for polygonal areas with holes or even for arbitrary planar
straight-line graphs (PSLGs). \Cref{fig:courtyard} shows a roof of a house
that has an inner courtyard. Supporting PSLGs is particularly important when
terrains defined by straight-line networks (rivers, roads, etc.) are to be
handled. Of course, since no interior or similar sideness is implied by a
PSLG, weights have to be specified for both sides of an edge of a PSLG.
See \Cref{fig:terrain} for a sample terrain induced by an additively-
and multiplicatively-weighted straight skeleton shown in \Cref{fig:terrain-sk}.

\begin{figure}[htb!]
  \centering
  \includegraphics[page=1,width=0.7\textwidth]{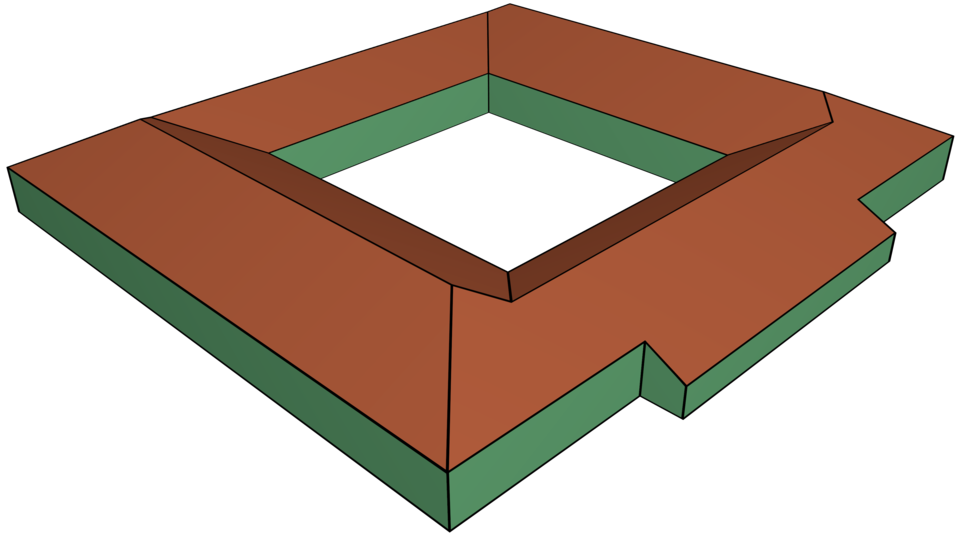}
    \caption{%
       The roof induced by an additively-weighted straight skeleton of a
       polygon with one hole.
     }
  \label{fig:courtyard}
\end{figure}

\begin{figure}[p]
  \centering

  \includegraphics[page=1,width=1.0\textwidth]{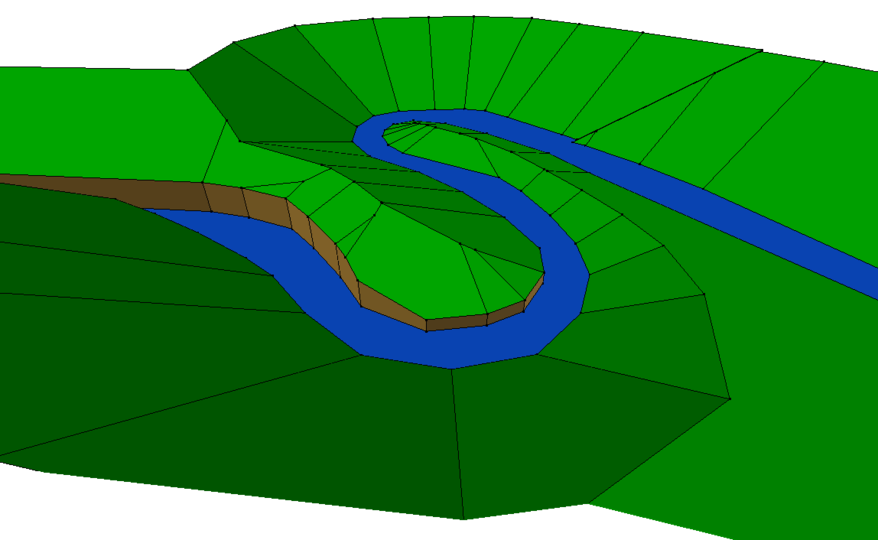}

  \caption{A terrain induced by a straight skeleton with both additive
    and multiplicative weights.  The corresponding weighted straight
    skeleton is shown in \Cref{fig:terrain-sk}.  }
  \label{fig:terrain}

  \vspace{3ex}

  \includegraphics[page=1]{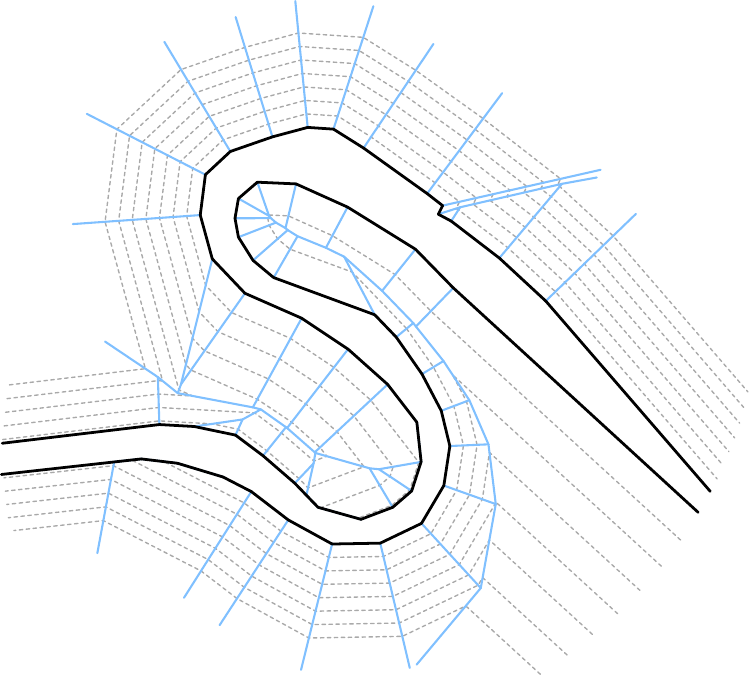}

  \caption{The planar straight-line graph (bold) whose straight skeleton (blue)
    induces the terrain from \Cref{fig:terrain}.
  }
  \label{fig:terrain-sk}
\end{figure}

Third, one can allow more than a single speed-change per edge.  As long as the
speed function for an edge remains piecewise constant, vertices will still
move along straight lines, all our lemmas and observations hold and, thus, the
straight skeleton will exhibit the properties discussed. See
  \Cref{fig:gablet} for a sample gablet roof where the additive and
  multiplicative weights of the edges change over time. Clerestory and
gambrel roofs can be generated similarly; see \Cref{fig:clerestory,fig:gambrel}.

Fourth, by inserting into the input polygon infinitesimally short edges near
edges with positive (or even infinite) additive weights, roof facets which
are not orthogonal to any input edge of non-zero length can be created.  This
variant was used to create the rhombic roof from \Cref{fig:rhombic}.

\begin{figure}[htb!]
  \centering
  \begin{subfigure}[b]{0.48\textwidth}
    \centering
    \includegraphics[page=1,width=0.95\textwidth]{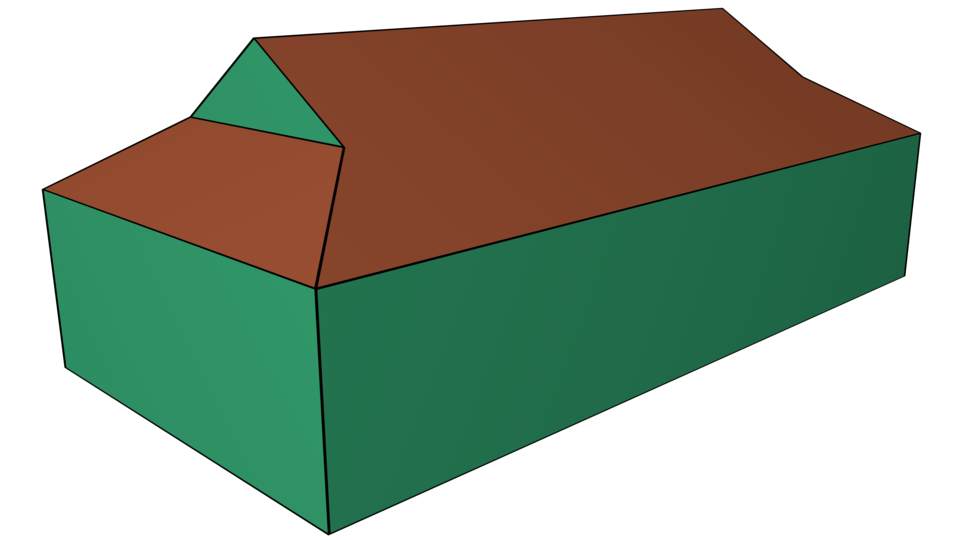}
    \caption[]{A gablet roof.}
    \label{fig:gablet}
  \end{subfigure}
  \hspace{0.1em}
  \begin{subfigure}[b]{0.48\textwidth}
    \centering
    \includegraphics[page=1,width=0.95\textwidth]{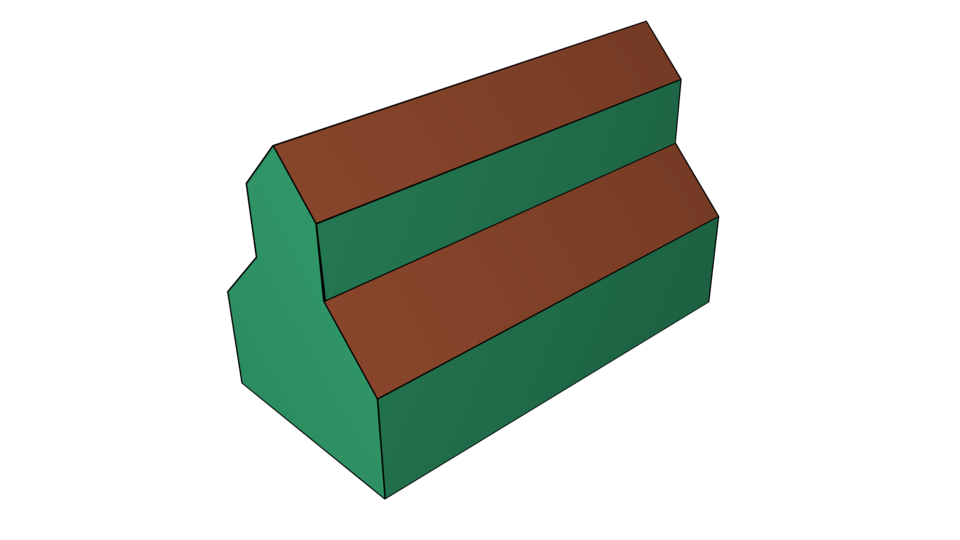}
    \caption[]{A clerestory roof.}
    \label{fig:clerestory}
  \end{subfigure}
  \\
  \begin{subfigure}[b]{0.48\textwidth}
    \centering
    \includegraphics[page=1,width=0.95\textwidth]{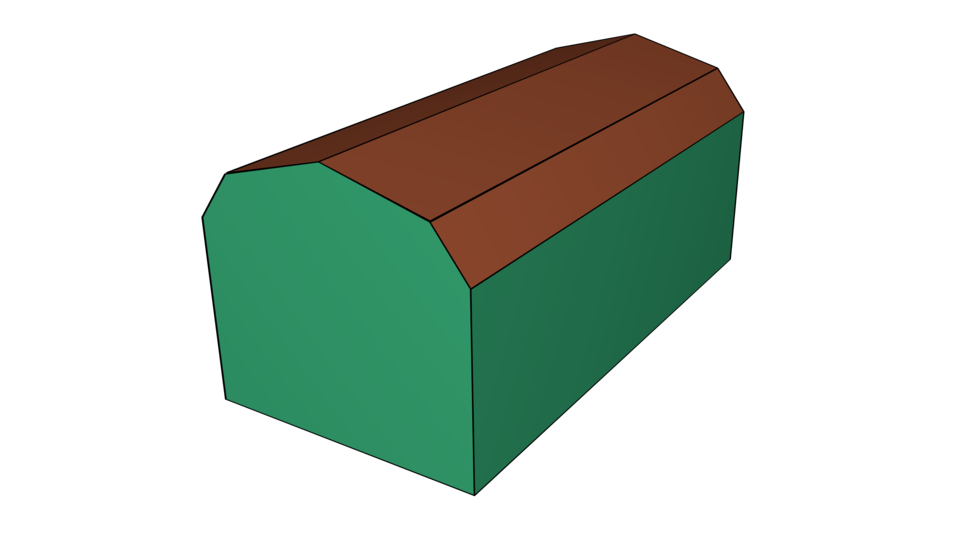}
    \caption[]{A gambrel roof.}
    \label{fig:gambrel}
  \end{subfigure}
  \hspace{0.1em}
  \begin{subfigure}[b]{0.48\textwidth}
    \centering
    \includegraphics[page=1,width=0.95\textwidth]{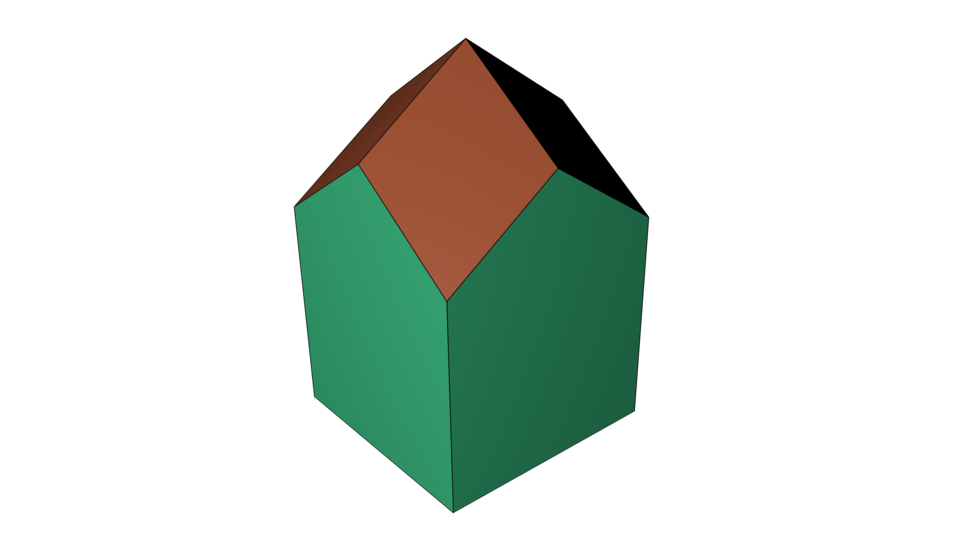}
    \caption[]{A rhombic roof.}
    \label{fig:rhombic}
  \end{subfigure}
  \\
  \begin{subfigure}[b]{0.48\textwidth}
    \centering
    \includegraphics[page=1,width=0.95\textwidth]{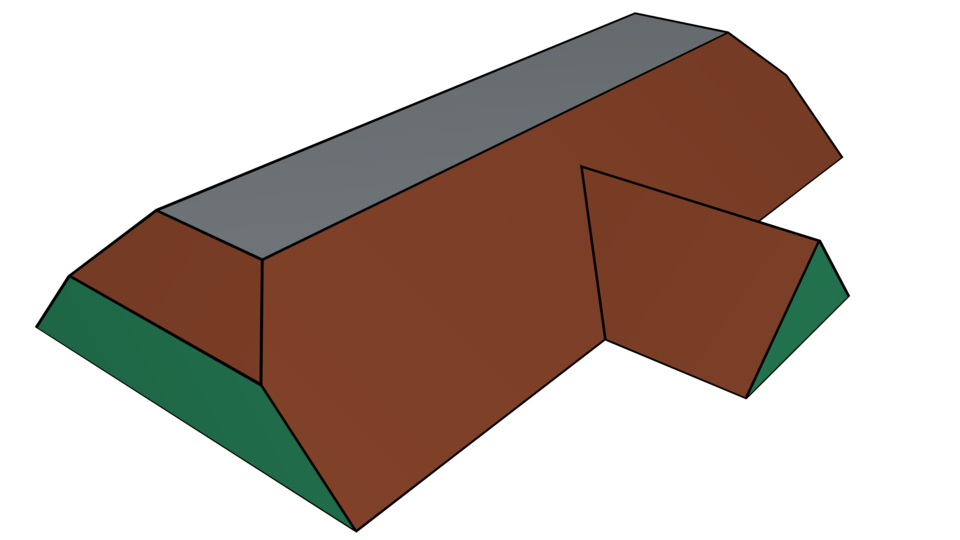}
    \caption[]{A flat-top roof.}
    \label{fig:flat_roof}
  \end{subfigure}
  \hspace{0.1em}
  \begin{subfigure}[b]{0.48\textwidth}
    \centering
    \includegraphics[page=1,width=0.95\textwidth]{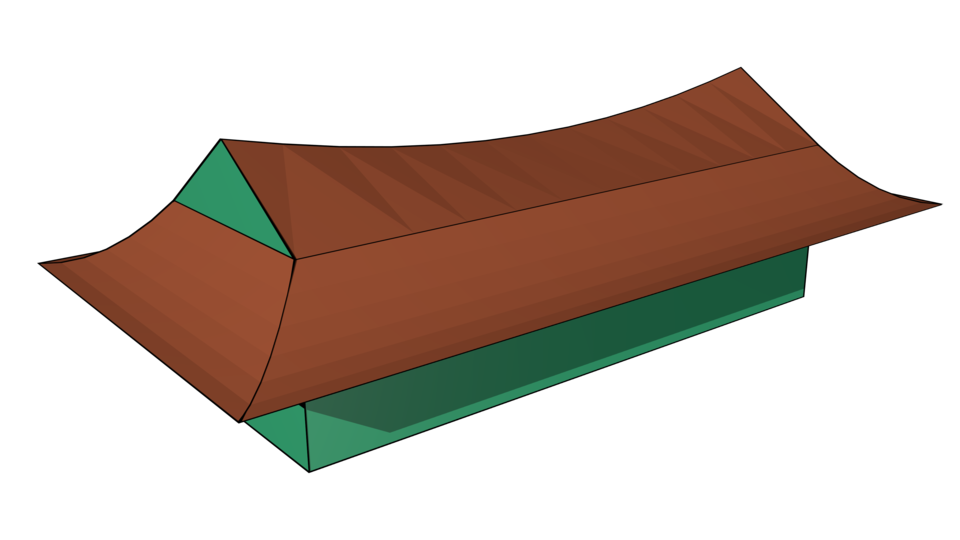}
    \caption[]{An Asian-style temple roof.}
    \label{fig:japanese_temple}
  \end{subfigure}

  \caption{%
    Different roof shapes can be created without additional
    post-processing (except for the temple roof in (f)), using an
    algorithmic framework based on weighted straight skeletons.  }
  \label{fig:roof-shapes}
\end{figure}

Recall that straight skeletons are defined by a wavefront
propagation. Running that process until all wavefronts have collapsed will
result in the full roof. We could, however, also halt the wavefront
propagation at some specific point in time. This will result in roofs which
have flat portions. See \Cref{fig:flat_roof} for a T-shaped house with a flat
top. From a mathematical point of view, a flat portion of a roof corresponds
to a speed change where the speed of one or more edges of the wavefront
becomes infinite.

Finally, standard post-processing techniques are also applicable to roofs
induced by straight skeletons. \Cref{fig:japanese_temple} shows a modification
of the gablet roof of \Cref{fig:gablet} where some ridges were replaced by
curved arcs, thus making it resemble the roof of an Asian-style temple.

\subsection{Computational Aspects}

A simple method is described by Aichholzer et al.~\cite{Aic&95} to compute the
unweighted straight skeleton: Compute the $\bigO(n)$ many collapse times of
all edges and the $\bigO(n^2)$ many times of all potential split events,
and maintain them in a priority queue.  When handling an event only a
constant number of edge collapses have to be recomputed, at constant cost each.
Since the total number of events is linear, the overall algorithm runs in
$\bigO(n^2 \log n)$ time within $\bigO(n^2)$ memory, where $n$ is the number
of vertices of the input polygon.

For the additively-weighted straight skeleton, the same approach can be
used. The computation of potential split event times is slightly more
involved but still in $\bigO(n^2)$.   On speed-change events, a possibly
linear number of collapses have to be recomputed, but the amortized cost
for these is still only linear.  Therefore, the additively-weighted
straight skeleton can also be computed in $\bigO(n^2 \log n)$ time and
$\bigO(n^2)$ space.

Biedl et al.~\cite{Bie&15b} compute the
multiplicatively-weighted straight skeleton (for positive weights) for
monotone polygons in $\bigO(n \log n)$ time and $\bigO(n)$
space.  Their approach hinges on the fact that all mitered offsets of a
monotone polygonal chain are still monotone.  As this property carries
over to additively-weighted input as well, the same algorithm can also
be used to compute additively-weighted straight skeletons for monotone
polygons in time $\bigO(n \log n)$.  This promises to be useful in
practice as footprints of buildings often are monotone polygons.

Aichholzer and Aurenhammer~\cite{AiAu98} describe a triangulation-based
algorithm. Their core idea is to maintain a kinetic
triangulation of the wavefront polygons, and to keep track of triangle
collapses in a priority queue since triangle collapses signal events.  As in
our prior work \cite{PalHH12}, it is this algorithm that we base our
implementation on.  We augmented the priority queue with the times of
speed-change events, and are thus able to compute additively-weighted
straight skeletons.

Our implementation is based on CGAL \cite{CGAL} and is capable of exactly
computing the straight skeleton of a PSLG with non-negative additive and
multiplicative weights. The straight skeletons and offsets in the figures of
this paper were produced by our code.  For the images, our implementation
exported the facets of the roofs and terrains to the graphics software
Blender~\cite{Blender} for rendering.

\section*{Acknowledgements}

Research by Martin Held and Peter Palfrader is supported by Austrian
Science Fund (FWF): P25816-N15.

\section*{References}
\bibliographystyle{elsarticle-num}
\bibliography{hp-add-skeleton}

\end{document}